\providecommand{\U}[1]{\protect\rule{.1in}{.1in}}
\theoremstyle{plain}
\newtheorem{thm}{Theorem}
\theoremstyle{definition}
\theoremstyle{proposition}
\theoremstyle{lemma}
\newtheorem{lemma}{Lemma}
\theoremstyle{corollary}
\newtheorem{coro}{Corollary}
\begin{document}
\title{On the $C^k$-embedding of Lorentzian manifolds in Ricci-flat spaces}

\begin{abstract}
In this paper we investigate the problem of non-analytic embeddings of
Lorentzian manifolds in Ricci-flat semi-Riemannian spaces. In order to do
this, we first review some relevant results in the area, and then
motivate both the mathematical and physical interest in this problem. We show
that any $n$-dimensional compact Lorentzian manifold $(M^{n},g)$, with $g$ in the Sobolev space $
H_{s+3}$, $s>\frac{n}{2}$, admits an isometric embedding in an $(2n+2)$%
-dimensional Ricci-flat semi-Riemannian manifold. The sharpest result
available for this type of embeddings, in the general setting, comes as a
corollary of Greene's remarkable embedding theorems [R. Greene, Mem. Am. Math. Soc. 97, 1 (1970)], which guarantee the embedding of a compact $n$-dimensional semi-Riemannian manifold
into an $n(n+5)$-dimensional semi-Euclidean space, thereby guaranteeing the
embedding into a Ricci-flat space with the same dimension. The theorem
presented here improves this corollary in $n^{2}+3n-2$ codimensions by
replacing the Riemann-flat condition with the Ricci-flat one from the beginning. Finally, we will present a corollary of this theorem, which shows that a compact strip in an $n$-dimensional globally hyperbolic space-time can be embedded in a $(2n+2)$-dimensional Ricci-flat semi-Riemannian manifold. 

\end{abstract}
\author{R. Avalos$^1$, F. Dahia$^{2}$, C. Romero$^{2}$}
\affiliation{${}^{2}\!$Departamento de F\'{\i}sica, Universidade Federal da Para\'{\i}ba, Caixa
Postal 5008, 58059-970 Jo\~{a}o Pessoa, PB, Brazil.\\
${}^{1}\!$Departamento de Matematica - UFC, Bloco 914 – Campus do Pici, 60455-760 Fortaleza, Cear\'a,  Brazil.\\
E-mail: rodrigo.avalos@fisica.ufpb.br; fdahia@fisica.ufpb.br; cromero@fisica.ufpb.br}
\maketitle

\address{}

\section{Introduction}

The problem of embedding general $n$-dimensional manifolds in higher-dimensional spaces with particular properties has been an active object of study for more than a century. It is a well known fact that, as mathematicians began to study abstract manifolds, the question of whether these structures were actually
more general than the submanifolds of Euclidean spaces naturally arose. In this direction, several very interesting theorems were proved, such as Whitney's embedding theorem \cite{Whitney}, stating that any $n$-dimensional manifold can be embedded in $\mathbb{R}^{2n}$; Jannet-Cartan's theorem, which
shows the existence of local isometric embeddings for $n$-dimensional
Riemannian manifolds in $\frac{n(n+1)}{2}$-dimensional Euclidean space in the
case of analytic metrics \cite{Janet}-\cite{Cartan}; Nash's famous embedding
theorem, which shows the non-analytic and global version of Janet-Cartan's
result \cite{Nash} and Greene's generalization of Nash's theorem to
semi-Riemannian geometry \cite{Greene}, where, for instance, it is shown that
a compact $n$-dimensional semi-Riemannian manifold can be embedded in an
$n(n+5)$-dimensional semi-Euclidean space. 

Directly related with our present object of study, it is also worth mentioning the results
obtained by C. J. S. Clarke in \cite{Clarke}, where it is shown that a
semi-Riemannian manifold with metric of signature ``$s$" can be embedded in a semi-Euclidean space of dimension $d=\frac{n}{2}(3n+12)+1-\frac{s}{2}$, for the compact case, and $\frac{n}{6}(2n^{2}+40)+\frac{5}{2}n^2+2-\frac{s}{2}$ for the non-compact case, and, also, that for the case of globally hyperbolic Lorentzian manifolds, Clarke suggests that the dimension of the Lorentzian embedding space can be shown to be $d=\frac{n}{6}(2n^{2}+37)+\frac{5}{2}n^{2}+2$. This last claim suffered of problem related with the so called \textit{folk problems on smoothability}, which is addressed and explained, for instance, in another relevant recent result by O. M\"{u}ller and M. S\'{a}nchez, where it is shown that any globally hyperbolic $n$-dimensional space-time can be isometrically
embedded in a Minkowski space of $d=N_{0}(n)+1$ dimensions, where $N_{0}(n)$
is the optimal dimension for the case of an embedding of an $n$-dimensional
Riemannian manifold into an $N$-dimensional Euclidean space \cite{M-S} (for a discussion
on the value of $N_{0}(n)$ see \cite{Greene}, \cite{Clarke}, \cite{chinos}).

This type of problem has also become an interesting object of study in the
context of modern physical theories. For instance, the Cauchy problem in
general relativity (GR) is nothing more than an embedding problem, in which we
look for an embedding of a 3-dimensional Riemannian manifold in a
4-dimensional Lorentzian manifold, satisfying the Einstein field equations
\cite{CB}. This very same problem is of interest in the context of
higher-dimensional theories of gravity \cite{Maartens},\cite{RS1}%
,\cite{RS2},\cite{Wesson1},\cite{Wesson2}. In this context, in which
space-time is supposed to have more than 4 dimensions, and the ordinary
space-time of GR is considered as a submanifold embedded in this
\textit{bulk}, natural additional embedding problems arise. For example, in
some of these higher-dimensional models, the bulk is supposed to be
characterized by some geometric property, such as being Ricci-flat or, more
generally, an Einstein space \cite{Maartens},\cite{Wesson2}. Thus, it becomes
a straightforward question whether or not any solution of the 4-dimensional
Einstein field equations can be embedded in this type of higher-dimensional
structures. This issue triggered some research and it was shown that, for the
local and analytic case, the embeddings do exist. These theorems are the
content of the Campbell-Magaard theorem and its extensions \cite{Campbell}%
-\cite{Magaard},\cite{DR1},\cite{DR2},\cite{DR3},\cite{DR4},\cite{ADR}. Applications of some of these theorems to physical situations can be reviewed in \cite{SW}. In this direction, it is also worth mentioning the approach taken in \cite{PDL}, where it is shown how to construct local isometric embeddings of vacuum solutions of the Einstein equations, into either Einstein spaces or solutions of the higher-dimensional Einstein equations sourced by a scalar field. 

Even though some of the above results have been around for a while, there is no reference in the literature, as far as we know, of non-analytic extensions of the Campbell-Magaard theorems. These type of results would be of interest not just as mathematically interesting problems, but also because some properties of vital importance in relativistic theories, namely causality and stability, may
demand non-analytic versions of the Campbell-Magaard theorems in order for
them to be applicable to realistic physical theories (for a discussion on this
matter, see \cite{Anderson}-\cite{DR5}). In particular, regarding stability of the embedding, since the Campbell-Magaard theorem, together with all of its extensions, are proved by means of the Cauchy-Kovalesvkya theorem, there is no guarantee that the embedding is stable against small perturbations on the space-time metric, even near the \textit{brane} representing the original space-time. Regarding causality, what should be made clear, is that we have to admit non-analytic solutions to our field equations if we intend to get interactions propagating from initial data with finite speed. These two very important remarks coming from physics, give us a strong motivation for the study of non-analytic embeddings for space-time.  

Another important point  to take into account, besides the regularity considerations, is that if we want to rigorously support the statement that \textit{ordinary} GR solutions can be embedded in these higher-dimensional space-times, then \textit{global} theorems are required. 

With all the above motivations in mind, the aim of this paper is to present a first result in this direction, proving a new global embedding theorem for compact $n$-dimensional Lorentzian manifolds, with metric in the Sobolev space $H_{s+3}$, $s>\frac{n}{2}$ (which requires metrics at least $C^{3}$, and
allows in particular smooth metrics, and in general $C^{k}$ metrics with $k$ depending on $n$), in Ricci-flat semi-Riemannian spaces, where the dimensionality needed for the embedding manifold is $2n+2$. It should be noted that the aforementioned embedding theorems of semi-Riemannian manifolds in
semi-Euclidean spaces, set an upper-bound for novel results concerning embeddings of Lorenzian manifolds in Ricci-flat semi-Riemannian spaces. In particular, the sharpest result previously known for the compact case, which will be our object of study, is the one presented by Robert E. Greene in \cite{Greene}, which guarantees the much stronger statement that an $n$-dimensional compact semi-Riemannian manifold can be embedded in an $n(n+5)$-dimensional Riemann-flat space. Thus, the existence of an embedding
in an $n(n+5)$-dimensional Ricci-flat space is guaranteed. This means that, using the best result known so far, $(n^{2}+3n-2)$ additional dimensions are needed with respect to the results we will present in this paper. Furthermore, we will show how the main theorem presented here, which, as explained above, applies to the compact (without boundary) case, can be used to prove the existence of embeddings for any arbitrary finite \textit{strip} in a globally hyperbolic space-time, with compact Cauchy surfaces, in a Ricci-flat manifold. Moreover, such embeddings can be constructed so as to be stable under small perturbations with respect to the ``space-time" metric. It is, to say the least, a curious fact that for the usual $4$-dimensional space-time of general relativity the embedding is guaranteed in a $10$-dimensional Ricci-flat space.

\section{The embedding problem.}

As we have already stated, the aim of this paper is to prove an embedding
theorem for Lorentzian manifolds in Ricci-flat semi-Riemannian ones.
Surprisingly enough, there is no much reference in the literature of
embeddings of general manifolds in Ricci-flat spaces besides the Cauchy
problem in GR and the Campbell-Magaard theorem. When trying to adapt any of
these results for the case we want to study, we are faced with serious
difficulties. On the one hand, the Campbell-Magaard theorem strongly relies on
the Cauchy-Kovalesvky theorem, which depends crucially on the analyticity
assumptions for the quantities involved. Thus, it is not well-suited as a
starting point for the non-analytic case, which is our present object of
study. This type of difficulty is not an odd feature of embedding problems. An
analogous situation was presented when trying to generalize the Janet-Cartan
embedding theorem, which culminated with Nash's theorem. On the other hand,
the Cauchy problem in GR strongly depends on the hyperbolic character of the
evolution equations and the elliptic one of the constraint equations (see, for
example, \cite{CB},\cite{CB-G}), which depend on the signatures of the
space-time metric and the induced metric on the space-like slices.
Nevertheless, these results require weak regularity assumptions, and we will,
in fact, use some of the results known for the Cauchy problem in GR to show
our main theorem.

\subsection{The Cauchy problem in GR}

The Cauchy problem in GR consists in the following. Given an initial data set
$(M,\bar{g},K)$ where $M$ is an n-dimensional smooth Riemannian manifold
with metric $\bar{g}$ and $K$ is a symmetric second rank tensor field, a
development of this initial data set is a space-time $(V,g)$, such that there
exists an embedding into $V$ with the following properties: \newline i) The
metric $\bar{g}$ is the pullback of $g$ by the embedding $i:M\mapsto V$,
that is $i^{\ast}g=\bar{g}$; \newline ii) The image by $i$ of $K$ is the second
fundamental form of $i(M)$ as a submanifold of $(V,g)$.

In the Cauchy problem in GR we look for a development of an initial data set
such that the resulting space-time satisfies the Einstein equations. It is
usually assumed that $V=M\times\mathbb{R}$. We will adopt this usual setting.

At this point, to study the Cauchy problem, it is customary to consider an
$(n+1)$-dimensional space-time $(V,g)$ and then make an ``$(n+1)$-splitting"
for the metric $g$. This means that we consider local co-frames where we can
write the metric $g$ in a convenient way, such that we have a ``space-time
splitting". In order to do this, a vector field $\beta$, which is constructed
so as to be tangent to each hypersurface $M\times\{t\}$, is used to define the
following local frame
\begin{align*}
e_{i}  &  =\partial_{i}\;\;;\;i=1,\cdots,n\\
e_{0}  &  =\partial_{t}-\beta
\end{align*}
and its dual coframe
\begin{align*}
\theta^{i}  &  =dx^{i}+\beta^{i}dt\;;\;\;i=1,\cdots,n\\
\theta^{0}  &  =dt
\end{align*}
Then we can write the metric $g$ in the form
\[
g=-N^{2}\theta^0\otimes\theta^0+g_{ij}\theta^{i}\otimes\theta^{j}%
\]
where $N$ is a positive function referred to as the \textit{lapse} function,
while the vector field $\beta$ is called the \textit{shift} vector. In this
adapted frame, the second fundamental form on each $M\times\{t\}$ takes the
form
\begin{equation}
K_{ij}=\frac{1}{2N}(\partial_{t}g_{ij}-(\bar{\nabla}_{i}\beta
_{j}+\bar{\nabla}_{j}\beta_{i})) \label{curvext}%
\end{equation}
where $\bar{\nabla}$ denotes the induced connection compatible with the
induced metric $\bar{g}$.

It has been shown that if an initial data set satisfies a particular system of
constraint equations, plus some low regularity assumptions, then it admits an
Einstenian development in a spacetime $V$ satisfying the vacuum Einstein
equations \cite{CB}. The constraints we are referring to, are the following:
\begin{align}
\bar{R}-|K|_{\bar{g}}^{2}+(tr_{\bar{g}}K)^{2} & =0\\
\bar{\nabla}\cdot K-\bar{\nabla}tr_{\bar{g}}K & =0
\end{align}
where $\bar{R}$ represents the Ricci scalar of $\bar{g}$,
$\ |\cdot|_{\bar{g}}$ denotes the pointwise-tensor norm in the metric
$\bar{g}$, and $\bar{\nabla}\cdot K$ denotes the divergence of $K$.
In coordinates, these equations become:
\begin{align}
\bar{R}-K^{ij}K_{ij}+(K_{l}^{l})^{2} & =0\label{hamit}\\
\bar{g}^{ju}\bar{\nabla}_{u}K_{ij}-\bar{\nabla}_{i}K_{l}^{l} &
=0 \label{momentum}%
\end{align}
These equations are considered on a particular initial hypersurface
$M\cong M\times\{t\}$, for example, in the hypersuface defined by $t=0$.

Equations (\ref{hamit})-(\ref{momentum}) are generally posed as a set of
equations for $\bar{g}$ and $K$. We will instead use their
\textit{thin-sandwich} formulation \cite{Wheeler}, in order to pose them for
$N$ and $\beta$ \cite{Bartnik},\cite{ADRL}. This is achieved by using the explicit relation between $K$ and $\dot{g}\doteq \partial_tg|_{t=0}$, given by (\ref{curvext}). Plugging this expression in the constraint equations (\ref{hamit})-(\ref{momentum}), if $R(g)<0$, it is possible equate the lapse function $N$ from the Hamiltonian constraint (\ref{hamit}), giving the following
\begin{equation}\label{lapse}
N=\sqrt{\frac{(\mathrm{tr}_{g} \gamma)^{2}-|\gamma|^{2}_{g}%
}{-R_{g}}}
\end{equation}
where the tensor $\gamma$ has components
\begin{equation}
\gamma_{ij}=\frac{1}{2}\big(\dot g_{ij}-(\nabla_{i}\beta_{j}+\nabla_{j}%
\beta_{i})\big).
\end{equation}

In this setting the momentum constraint (\ref{momentum}) becomes a non-linear second order operator on the shift vector field, defining an equation of the form $\Phi(\psi,\beta)=0$, where $\psi= (g,\dot{g})$ is regarded as \textit{the freely specifiable part} of the initial data set $(g,K)$. The idea is that this procedure can be reversed, \textit{i.e}, whenever the set of equations $\Phi(\psi,\beta)=0$ are well-posed as equations for the shift, taking (\ref{lapse}) as a definition and using (\ref{curvext}), we get an initial data set $(g,K)$ satisfying the vacuum constraint equations (see \cite{ADRL} for further details).  In this context we will have the following initial data defined in its corresponding functional spaces:
\begin{align*}
(\bar{g},\dot{g})\in E_{1}\doteq H_{s+3}(T^{0}_{2}M)\times H_{s+1}%
(T^{0}_{2}M) \; ; \; s> \frac{n}{2}.
\end{align*}
where, as usual, $H_{s}(T^{p}_{q}M)$ represents the Sobolev space of $(p,q)$-tensor fields with $s$-generalized derivatives in $L^{2}$ (see \cite{CB}), and we are left with the analysis of the non-linear equations $\Phi(\psi,\beta)=0$. 

Using the setting presented above, it has been recently shown that, on any $n$-dimensional smooth compact manifold $M$, $n\geq3$, we can always find a smooth solution of the vacuum constraint equations $(\psi_{0}=(g_{0},\dot{g}_{0}),N_{0},\beta_{0})$, such that for any $\psi=(g,\dot{g})$ ($g$ a Riemannian metric) in a sufficiently small $E_1$-neighbourhood of $\psi_{0}$ there is also a unique solution of the vacuum constraint equations, constructed as dicussed above by means of the thin-sandwich formulation. This is part of the content of Theorem 3 in \cite{ADRL}. From now on, we will typically refer to such solution $(\psi_0,N_0,\beta_0)$ as a \textit{reference solution} for the constraint equations. All this, in turn, guarantees that there is an embedding of the Riemannian manifold $(M^{n},g)$ into a Ricci-flat $(n+1)$-dimensional space-time $V$, where the space-time metric would be at least $C^{2}$ (the more regular the initial data is, the more regular the space-time metric will be) \cite{CB}. This new result will be our main tool in proving a new embedding theorem. The main idea for this proof goes as follows:

Suppose we are given a smooth compact $n$-dimensional Lorentzian manifold $(M,g)$ and we want to embed it in some Ricci-flat manifold. If we can perturb the metric $g$ in such a way that the perturbed metric $g'$ is now a properly Riemannian metric, and furthermore $g'$ is part of an initial data set solving the vacuum constraint equations (\ref{hamit})-(\ref{momentum}), then, using standard results on the Cauchy problem for GR, we could guarantee the existence of an isometric embedding of $(M,g')$ into an $(n+1)$-dimensional space-time $(V,\bar g')$. If we can arrange things in such a way that the perturbation $g'$ lies in a sufficiently small $E_1$-neighbourhood of the reference solution of the constraints used in the main theorem presented in \cite{ADRL} (see Theorem 3 therein), which has been explained above, then, we can guarantee the existence of the previous embedding appealing to this theorem, and we would have embeddings for both $g_0$ and $g'$. Roughly speaking, the more or less obvious thing to do at that point, would be to embed $(M,g)$ into the product of these two embedding Lorentzian manifolds, and use the embeddings for $g_0$ and $g'$ to construct the embedding for $(M,g)$.

\subsection{The Main Results}

Having established the framework in which we will be operating, we will now
make our assumptions explicit. We will always work on compact manifolds (without
boundary), and we will assume that we have a smooth reference
solution of the constraint equations on such manifolds, denoted by $(\psi
_{0}\doteq(g_{0},\dot{g}_{0}),N_{0},\beta_{0})$, such that, in the previously
defined topologies, there is a neighbourhood of $\psi_{0}$ where, for each
element in this neighbourhood, the constraint equations have a unique solution.
The existence of such a reference solution is guaranteed by Theorem 3 in \cite{ADRL}. Also, in order to define the Sobolev spaces, we make use of the smooth Riemannian
metric $g_{0}$. We can now prove the following proposition:

\begin{lemma}\label{perturbation}
Given a compact $n$-dimensional Lorentzian manifold $(M^{n},g)$, with $g\in
H_{s+3}$, it is always possible to find a Riemannian metric $\tilde{g}$ on $M$
such that:\newline i) $\tilde{g}$ is as close of $g_{0}$ as we want.\newline
ii) $g=\lambda(\tilde{g}-g_{0})$\newline for some positive constant $\lambda$.
\end{lemma}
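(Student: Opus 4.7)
The statement is essentially a quantitative observation: condition (ii) dictates that $\tilde g$ must have the form $\tilde g = g_0 + \lambda^{-1} g$, so the only freedom is the choice of $\lambda$, and the plan is to show that $\lambda$ sufficiently large makes $\tilde g$ both Riemannian and $H_{s+3}$-close to $g_0$. Thus my proof would be constructive: set $\tilde g := g_0 + \lambda^{-1} g$ and verify properties (i) and (ii) for any $\lambda$ large enough, where the required lower bound on $\lambda$ is determined by two independent requirements.

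First, for the Sobolev closeness in (i), note that
\begin{equation*}
\|\tilde g - g_0\|_{H_{s+3}} = \lambda^{-1}\|g\|_{H_{s+3}},
\end{equation*}
so for any prescribed $E_1$-neighborhood of $g_0$, it suffices to take $\lambda$ greater than $\|g\|_{H_{s+3}}$ divided by the tolerance. Condition (ii) is automatic from the definition of $\tilde g$.

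The only genuine verification is that $\tilde g$ is positive definite. Here I would use the Sobolev embedding $H_{s+3}\hookrightarrow C^0$ (valid since $s>n/2$) combined with compactness of $M$ to obtain a uniform pointwise bound $|g|_{g_0}\leq C$ for some constant $C$ depending only on $\|g\|_{H_{s+3}}$ and the fixed background $g_0$. Then at each $p\in M$, for any unit tangent vector $v$ (unit with respect to $g_0$),
\begin{equation*}
\tilde g_p(v,v) = g_{0,p}(v,v) + \lambda^{-1} g_p(v,v) \geq 1 - \lambda^{-1} C,
\end{equation*}
which is strictly positive once $\lambda > C$. Since this bound is uniform in $p$ on the compact manifold $M$, taking $\lambda$ larger than the maximum of $C$ and the bound dictated by (i) gives a Riemannian metric $\tilde g$ fulfilling both requirements simultaneously.

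There is no real obstacle to this argument; the mildly delicate point is simply that the positivity check must be uniform on $M$, and this is handled by the Sobolev embedding into $C^0$ together with compactness, which together guarantee a finite maximum of $|g|_{g_0}$ that controls how large $\lambda$ must be. All other steps are algebraic.
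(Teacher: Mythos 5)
Your proposal is correct and follows essentially the same route as the paper: define $\tilde g = g_0 + \lambda^{-1}g$, note condition (ii) is automatic and the $H_{s+3}$-distance scales as $\lambda^{-1}\|g\|_{H_{s+3}}$, and establish positive definiteness by a uniform bound for $g$ over the $g_0$-unit sphere bundle of the compact manifold. The only cosmetic difference is that you extract this uniform bound from the Sobolev embedding $H_{s+3}\hookrightarrow C^0$, while the paper obtains it by applying the extreme value theorem to the continuous ratio $g(v,v)/g_0(v,v)$ on the compact unit bundle; both hinge on the same continuity-plus-compactness fact.
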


\begin{proof}
On $M$ we have both $g$ and $g_0$ defined. Now let
\begin{align*}
TU\doteq \underset{p\in M}{\sqcup}\{ v\in T_pM \; \slash \; g_0(v,v)=1 \}
\end{align*}
be the unit bundle associated to $g_0$, and let
\begin{align*}
F:TU &\mapsto \mathbb{R}\\
v=(x,v_x)& \mapsto \frac{g_x(v_x,v_x)}{{g_{0}}_x(v_x,v_x)}
\end{align*}
Noticing that $F$ defines a continuous function on $TU$, which, since $M$ is compact, is a compact set, then $F$ attains its minimum. Furthermore, since $g$ is Lorentzian, then this minimum must be a negative number. Thus we get that $\exists$ $\alpha<0$ such that
\begin{align*}
\alpha\leq F(v) \; \forall \; v\in TU.
\end{align*}
Then, $\forall$ $\lambda>0$ satisfying $-\lambda<\alpha$ it holds that:
\begin{align*}
-\lambda\leq \frac{g_x(v_x,v_x)}{{g_{0}}_x(v_x,v_x)} \; \forall \; v\in TU,
\end{align*}
which gives us the following
\begin{align}
0\leq \frac{1}{\lambda}g_x(v_x,v_x)+{g_{0}}_x(v_x,v_x) \; \forall \; v\in TU.
\end{align}
Now define $\tilde{g}\doteq\frac{1}{\lambda}g+g_0
$. The claim is that this metric is positive-definite. In order to see this, simply note that for arbitrary $m\in M$ and $\forall$ $V\in T_mM$, $V\neq0$, the following holds:
\begin{align*}
\tilde{g}_m(V,V)=\Vert V\Vert^2_{g_0}\tilde{g}_m\Big(\frac{V}{\Vert V\Vert_{g_0}},\frac{V}{\Vert V\Vert_{g_0}}\Big)>0
\end{align*}
since $\frac{V}{\Vert V\Vert_{g_0}}\in T_mU$, where $\Vert V\Vert_{g_0}=(g_0(V,V))^{\frac{1}{2}}$. This last relation shows that $\tilde{g}$ is a well-defined Riemannian metric on $M$.
Now, in order to show our first statement, we simply note that $\lambda$ can in fact be taken as large as we want, thus, given $\epsilon>0$, $\exists$ $\lambda>0$ satisfying the previous statements together the following one:
\begin{align}\label{lambda}
\lVert \tilde{g}-g_0 \rVert_{H_{s+3}}=\frac{1}{\lambda}\lVert g \rVert_{H_{s+3}}<\epsilon.
\end{align}
Having established our first claim, we see that the second one is a trivial consequence of the definition of $\tilde{g}$, and thus the proposition holds.
\end{proof}

We now present the main theorem:

\begin{thm}\label{main}
Any $n$-dimensional compact Lorentzian manifold $(M,g)$, with $n\geq3$ and
$g\in H_{s+3}$, $s>\frac{n}{2}$, admits an embedding in a Ricci-flat
$(2n+2)$-dimensional semi-Riemannian manifold with index $n+1$ (that is, with
$n+1$ time-like dimensions).
\end{thm}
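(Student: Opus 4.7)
The proof strategy is already laid out before the statement: use Lemma \ref{perturbation} to express $g$ as a signed difference of two Riemannian metrics, apply Theorem 3 of \cite{ADRL} to each so as to obtain isometric embeddings into $(n+1)$-dimensional Ricci-flat Lorentzian manifolds, and then assemble the result inside a suitable product.

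First, I apply Lemma \ref{perturbation} to produce a Riemannian metric $\tilde{g}\in H_{s+3}$ and a constant $\lambda>0$ with $g=\lambda(\tilde{g}-g_0)$; by (\ref{lambda}), enlarging $\lambda$ we may guarantee that $\tilde{g}$ lies inside the $H_{s+3}$-neighborhood of $g_0$ supplied by Theorem 3 of \cite{ADRL}. Pairing $\tilde{g}$ with $\dot{g}_0$ places the initial datum $\tilde{\psi}\doteq(\tilde{g},\dot{g}_0)$ in the required $E_1$-neighborhood of $\psi_0$, so that theorem yields a solution of the vacuum constraints whose Einstein development gives an isometric embedding $i:(M,\tilde{g})\hookrightarrow(V,\bar{g}')$ into an $(n+1)$-dimensional Ricci-flat Lorentzian manifold. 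Applied to the reference data itself, the same theorem provides a parallel isometric embedding $i_0:(M,g_0)\hookrightarrow(V_0,\bar{g}_0)$.

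Next, I endow the $(2n+2)$-dimensional product $V_0\times V$ with the semi-Riemannian metric
\[
h \,\doteq\, -\lambda\,\pi_0^{\ast}\bar{g}_0 \,\oplus\, \lambda\,\pi^{\ast}\bar{g}',
\]
where $\pi_0,\pi$ denote the canonical projections, and define the diagonal map $\Phi:M\to V_0\times V$ by $\Phi(p)=(i_0(p),i(p))$. A direct pullback computation yields $\Phi^{\ast}h=-\lambda g_0+\lambda\tilde{g}=\lambda(\tilde{g}-g_0)=g$, so $\Phi$ is an isometric immersion; injectivity is inherited from $i_0$, and compactness of $M$ upgrades $\Phi$ to a proper embedding.

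Two checks on $(V_0\times V,h)$ remain. For Ricci-flatness: the Ricci tensor is invariant under multiplication of the metric by any nonzero constant and decomposes as the direct sum of the Ricci tensors of the factors on a product, so Ricci-flatness of both $\bar{g}_0$ and $\bar{g}'$ transfers to $h$. For the signature: each Lorentzian factor has index $1$ in dimension $n+1$, so flipping sign on the first contributes index $n$ while the second contributes index $1$, giving total dimension $2n+2$ and index $n+1$, as required. The only delicate point in this scheme is the simultaneous fulfilment of the three conditions on $\tilde{g}$, namely Riemannian character, the algebraic identity $g=\lambda(\tilde{g}-g_0)$, and proximity to $g_0$ in $H_{s+3}$; Lemma \ref{perturbation} handles all three at once, since taking $\lambda$ large makes the right-hand side of (\ref{lambda}) arbitrarily small without affecting the other two requirements. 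The remaining verifications are routine functorial properties of Ricci tensors and product metrics.
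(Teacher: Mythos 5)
Your proposal is correct and follows essentially the same route as the paper: the decomposition $g=\lambda(\tilde{g}-g_0)$ from the perturbation lemma, the two Cauchy developments supplied by Theorem 3 of the thin-sandwich paper, the diagonal map into the product with metric $\lambda(\pi^{*}h_1-\sigma^{*}h_2)$, and Ricci-flatness via scaling invariance and the product decomposition. Your explicit verification of the index $n+1$ is a small welcome addition that the paper's proof leaves implicit.
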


\begin{proof}
We will start by appealing to Lemma \ref{perturbation} and writing the Lorentzian metric $g$ in the following way
\begin{align}\label{eq1}
g=\lambda(\tilde{g}-g_0)
\end{align}
where both $g_0$ and $\tilde{g}$ are Riemannian metrics, and $g_0$ is part of a reference solution $(g_0,\dot{g}_0,N_0,\beta_0)$ of the vacuum constraint equation (\ref{hamit})-(\ref{momentum}) on $M$. As we have already stated above, the main theorem presented in \cite{ADRL}, guarantees that under our hypotheses we can always pick such a solution of the constraint equations, with $g_0$ being actually smooth, and guarantee that for initial data $(g,\dot{g})$ in a small enough neighbourhood of $(g_0,\dot{g}_0)$, there is a unique solution $(N,\beta)$ of the constraint equations (\ref{hamit})-(\ref{momentum}). Thus, picking $\lambda$ and $\tilde{g}$ so that $\tilde{g}$ lies in a small enough $H_{s+3}$-neighbourhood of $g_0$, then, given the initial data $(\tilde{g},\dot{g}_0)$, we know that there is a solution of the constraint equations, and, thus, that there are isometric embeddings $\phi_1$ and $\phi_2$, of $(M,\tilde{g})$ and $(M,g_0)$, into the Ricci-flat Lorentzian manifolds $(V\doteq M\times[0,T),h_1)$ and $(V,h_2)$ respectively, for some $T>0$, with $h_1$ and $h_2$ at least $C^2$. We now consider the semi-Riemannian manifold $(V\times V,h)$, where
\begin{align}\label{eq2}
h\doteq \lambda(\pi^{*}{h_1}-\sigma^{*}h_2)
\end{align}
and $\pi$ and $\sigma$ denote the projections of $V\times V$ onto its first and second factors respectively. Thus, the manifold is the product semi-Riemannian manifold resulting from $(V,h_1)$ and $(V,-h_2)$. We claim that the following map is an isometric embedding
\begin{align}\label{eq3}
\begin{split}
\phi:M&\mapsto V\times V\\
m&\mapsto (\phi_1(m),\phi_2(m))
\end{split}
\end{align}
The fact that $\phi$ is an embedding comes from the fact that both $\phi_1$ and $\phi_2$ are embeddings. To check the isometry condition, given $v,w\in T_m M$, we compute the following:
\begin{align*}
\phi^{*}(h)_m(v,w)&=h(d\phi_m(v),d\phi_m(w))\\
&=\lambda(h_1(d\pi_{\phi(m)}\circ d\phi_m(v),d\pi_{\phi(m)}\circ d\phi_m(w))-h_2(d\sigma_{\phi(m)}\circ d\phi_m(v),d\pi_{\phi(m)}\circ d\phi_m(w)))\\
&=\lambda(h_1({d\phi_{1}}_m(v),{d\phi_{1}}_m(w))-h_2({d\phi_{2}}_m(v),{d\phi_{2}}_m(w)))\\
&=\lambda(\phi_1^{*}(h_1)_m(v,w)-\phi_2^{*}(h_2)_m(v,w))\\
&=\lambda(\tilde{g}-g_0)_m(v,w)\\
&=g_m(v,w).
\end{align*}
This last equality shows the isometry condition. As a final step, we have to show that $(V\times V,h)$ is Ricci-flat. But, since the semi-Riemannian product manifold of Ricci-flat spaces, with the usual product structure, is again Ricci-flat, the conclusion follows immediately.
\end{proof}

This theorem possesses quite some intrinsic geometric value. It provides a general embedding result for compact Lorentzian manifolds with smooth metrics into Ricci-flat spaces, and the codimension needed for the embedding space, even though greater than in the Campbell-Magaard theorem, is much lower than the one needed using the result obtained by R. Greene for the much stronger condition of Riemann-flatness on the embedding space. Explicitly, $(n^{2}+3n-2)$-extra dimensions are saved. Furthermore, we can get a few interesting corollaries from the previous theorem.

\begin{coro}\label{stability}
Any given compact $n$-dimensional Lorentzian manifold $(M,g)$, with metric in the Sobolev space $H_{s+3}$, $s> \frac{n}{2}$, admits an isometric embedding in a $2(n+1)$-dimensional Ricci-flat semi-Riemannian manifold $(\tilde{M}^{2n+2},h)$, with index\:$h$ $=n+1$, where both the embedding and $h$ depend continuously on $g$.  
\end{coro}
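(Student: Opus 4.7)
The plan is to revisit each step of the proof of Theorem \ref{main} and verify that every object constructed there depends continuously on the Lorentzian input $g$, in appropriately chosen topologies. First, I would strengthen Lemma \ref{perturbation} to a ``uniform'' version: for $g$ ranging in a fixed $H_{s+3}$-neighborhood of some reference Lorentzian metric $g^{\ast}$, the compactness argument used on the unit bundle $TU$ shows that a single sufficiently large $\lambda > 0$ works simultaneously for every such $g$ to produce a Riemannian metric $\tilde g = g_0 + \lambda^{-1} g$ lying within the small $H_{s+3}$-neighborhood of $g_0$ where Theorem 3 of \cite{ADRL} applies. The assignment $g \mapsto \tilde g$ is then affine, hence continuous in the $H_{s+3}$ topology.

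Next, I would invoke the continuous dependence of the thin-sandwich solution on the freely specifiable data. The implicit function theorem argument underlying Theorem 3 of \cite{ADRL} produces the lapse-shift pair $(N,\beta)$ as a continuous function of $(\tilde g, \dot g_0)$, so $g \mapsto (N(g),\beta(g))$ is continuous. Combined with \eqref{curvext}, this yields a continuous map $g \mapsto K(g)$ from $H_{s+3}$-metrics into the appropriate Sobolev space of second fundamental forms. The classical well-posedness of the vacuum Einstein Cauchy problem (see \cite{CB}) then produces, after possibly shrinking the neighborhood of $g^{\ast}$ to secure a common existence time $T > 0$, a spacetime metric $h_1(g)$ on $V = M \times [0,T)$ which depends continuously on the initial data $(\tilde g(g), K(g))$, and therefore continuously on $g$.

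Finally, setting $(V \times V, h(g))$ with $h(g) = \lambda\bigl(\pi^{\ast} h_1(g) - \sigma^{\ast} h_2\bigr)$, and noting that $h_2$ and the inclusion-type embeddings $\phi_1,\phi_2 : m \mapsto (m,0)$ are fixed objects determined only by the reference solution $(\psi_0,N_0,\beta_0)$, the ambient metric $h(g)$ is affine in $h_1(g)$ and hence continuous in $g$, while the embedding $\phi = (\phi_1,\phi_2)$ is literally independent of $g$. Ricci-flatness and the signature index $n+1$ are preserved automatically as in the proof of Theorem \ref{main}. The main obstacle is essentially bookkeeping: one has to pin down precisely which topologies are being used on the space of Lorentzian metrics, on the ambient manifolds, and on the space of embedding maps, and then check that the continuous-dependence statements imported from \cite{ADRL} and from the local Cauchy theory for the vacuum Einstein equations are all phrased in compatible norms. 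Once that framework is in place, continuity propagates through each step of the construction without further work.
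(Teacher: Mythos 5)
Your proposal is correct and follows essentially the same route as the paper: the embedding is the fixed inclusion-type map, and continuity of $h$ is obtained by propagating continuous dependence through $g\mapsto\tilde g\mapsto (N,\beta,K)\mapsto h_1\mapsto h$, invoking Theorem 3 of \cite{ADRL} together with the continuous dependence of the vacuum Cauchy development on its initial data, exactly as in the paper's argument. The only (inessential) difference is the treatment of $\lambda$: the paper makes it an explicit continuous function of $g$, namely $\lambda_0(g)=\frac{\|g\|_{H_{s+3}}}{\epsilon}+1$, whereas you freeze a single $\lambda$ on an $H_{s+3}$-neighborhood of a reference Lorentzian metric, which yields the same (local) continuity statement.
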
 
\begin{proof}
From the proof of the previous theorem we get that the embedding $\phi:M\mapsto V\times V$, depends on $g$ only through the embedding $\phi_1:(M,\tilde{g})\mapsto (V=M\times[0,T),h_1)$, $T>0$. Now, from basic facts about the Cauchy problem for the vacuum Einstein equations, we know that $\phi$ is just the inclusion map, and that, under our functional hypotheses, $h_1$ is continuous with respect to $\tilde{g}=\frac{1}{\lambda(g)}g+g_0$. Now, from ($\ref{lambda}$), we see that we any $\lambda(g)>\frac{||g||_{H_{s+3}}}{\epsilon}$ makes the procedure work. In particular, we can fix a choice which satisfies this condition and makes $\lambda$ a continuous function on $g$, for instance $\lambda_0(g)\doteq \frac{||g||_{H_{s+3}}}{\epsilon}+1$. With this choice, it is clear that $\tilde{g}_0\doteq\frac{1}{\lambda_0(g)}g+g_0$ depends continuously on $g$, with respect to the $H_{s+3}$ topology. Thus, using the fact that the Cauchy development of vacuum initial data for the Einstein field equations is continuous with respect to the initial data, then $h_1$ is continuous with respect to $g$, which proves our statement.
\end{proof}

The physical interpretation of this corollary would be that the embedding is \textbf{stable} against small perturbations on the space-time metric $g$. Thus, if we think of a compact space-time $(M,g)$ as an embedded submanifold in $(M^{2n+2},h)$, then, we can guarantee that we can do this embedding in such a way that, near the submanifold representing the 4-dimensional space-time, the embedding is stable against small perturbations on the space-time metric $g$.

Now, we will present an important corollary which shows that Theorem \ref{main} can be used to guarantee the embedding of \textit{strips} of globally hyperbolic space-times with compact space-slices. 

\begin{coro}
Suppose we have a strip $(M\times[0,T),g)$, $T>0$, of an $n$-dimensional globally hyperbolic space-time, where $M$ is an $(n-1)$-dimensional closed manifold, and the Lorentzian metric $g$ is smooth. Then, for any $0< T_1< T_2<T$, the closed strip $M\times [T_1,T_2]$ admits an isometric embedding in a $(2n+2)$-dimensional Ricci-flat manifold $(\tilde{M}^{2n+2},h)$. Furthermore, both the embedding and $h$ can be chosen to be continuous with respect to the space-time metric $g$.
\end{coro}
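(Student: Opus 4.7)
The plan is to reduce the statement to Theorem \ref{main} (and Corollary \ref{stability}) by constructing, from the strip $M\times[T_1,T_2]$, a compact $n$-dimensional Lorentzian manifold without boundary into which the strip isometrically embeds. The natural candidate is $M\times S^1$, which is $n$-dimensional, closed, and admits Lorentzian metrics because it carries a nowhere-vanishing vector field (the $S^1$ factor). Once such an extension is built, Theorem \ref{main} immediately produces an embedding of $M\times S^1$ into a $(2n+2)$-dimensional Ricci-flat manifold of index $n+1$, and pre-composing with the inclusion $M\times[T_1,T_2]\hookrightarrow M\times S^1$ yields the desired embedding of the closed strip.

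To construct the extended Lorentzian metric $\tilde g$ on $M\times S^1$, I would exploit the global $(n+1)$-splitting that global hyperbolicity gives us on $M\times[0,T)$, namely
\[
g = -N^2\,dt^2 + \bar g(t)_{ij}(dx^i+\beta^i dt)(dx^j+\beta^j dt),
\]
with $N>0$ a lapse function, $\bar g(t)$ a Riemannian metric on $M$ for each $t$, and $\beta$ a shift vector field tangent to the slices. Pick $\epsilon>0$ so that $T_1-\epsilon>0$ and $T_2+\epsilon<T$, and embed $(T_1-\epsilon,T_2+\epsilon)$ as an open arc of a circle $S^1$ of length $L>T_2-T_1+2\epsilon$. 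Using a smooth cut-off function supported on this arc, extend the lapse, shift and spatial metric to all of $S^1$ so that $N$ stays strictly positive and $\bar g(t)$ stays positive-definite everywhere, matching prescribed background data on the complementary arc. Plugging these extended pieces back into the displayed formula produces a smooth Lorentzian metric $\tilde g$ on $M\times S^1$ that coincides with $g$ on $M\times[T_1,T_2]$.

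With $(M\times S^1,\tilde g)$ in hand, Theorem \ref{main} yields an isometric embedding $\Phi:M\times S^1\to(\tilde M^{2n+2},h)$ into a Ricci-flat semi-Riemannian manifold of index $n+1$, and the restriction $\Phi\big|_{M\times[T_1,T_2]}$ is the required embedding of the strip. For the stability clause, I would appeal to Corollary \ref{stability}: the assignment $g\mapsto\tilde g$ is continuous in the Sobolev topology because it is nothing but a cut-off interpolation applied pointwise to the splitting data, and Corollary \ref{stability} then guarantees that both $\Phi$ and $h$ depend continuously on $\tilde g$, and hence on $g$.

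The delicate step is the construction of $\tilde g$: one must ensure that the interpolation keeps the extended tensor everywhere of Lorentzian signature and that the map $g\mapsto\tilde g$ is continuous in $H_{s+3}$. Working at the level of the $(n+1)$-splitting is what makes this clean, since Lorentzian signature is preserved as long as $N>0$ and $\bar g(t)$ stays positive-definite, two pointwise open conditions that are easy to preserve under a cut-off; by contrast, trying to interpolate the full tensor $g$ with a fixed background Lorentzian metric would immediately run into the non-convexity of the Lorentzian cone in the space of symmetric $2$-tensors.
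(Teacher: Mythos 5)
Your overall strategy coincides with the paper's: both proofs reduce the corollary to Theorem \ref{main} by building a closed $n$-dimensional Lorentzian manifold $M\times S^1$ that contains the strip $(M\times[T_1,T_2],g)$ isometrically, then compose the inclusion with the embedding of $(M\times S^1,\cdot)$ into a $(2n+2)$-dimensional Ricci-flat space, and both dispatch the stability clause by the same appeal to Corollary \ref{stability} via continuity of the extension map $g\mapsto\tilde g$. Where you differ is in how the metric on $M\times S^1$ is produced. The paper stays at the level of the full tensor: it uses two bump functions to interpolate between $g(\cdot,t)$ and a time-translated copy $g(\cdot,t-(T_4-T_1))$ on a transition region, arranges that the metric and all its $t$- and spatial derivatives at the slice $t=T_4$ reproduce those at $t=T_1$, and then glues the two boundary slices to obtain $(M\times S^1,G')$. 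You instead extend the ADM data $(N,\beta,\bar g(t))$ around the circle by a cut-off toward fixed background data, which makes the Lorentzian signature of the extension manifest (only $N>0$ and positive-definiteness of $\bar g(t)$ must be preserved, both stable under convex interpolation) and avoids having to match jets across a gluing seam. This is a genuine advantage of your implementation: in the paper's transition region the tensor is a positive combination $f_1\,g(\cdot,t)+f_2\,g(\cdot,t-(T_4-T_1))$ of two Lorentzian metrics, and such combinations need not be Lorentzian (precisely the non-convexity you point out), so the paper's assertion that $G$ is Lorentzian there is left unverified, whereas your splitting-level construction settles this automatically. The paper's gluing, on the other hand, has the mild aesthetic merit of using only the given metric $g$ itself, with no auxiliary background data. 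Both routes are correct in outline and deliver the same conclusion with the same dependence on Theorem \ref{main} and Corollary \ref{stability}; the one point you should state explicitly is that the slices $M\times\{t\}$ of the strip are spacelike Cauchy surfaces, so that the global ADM form of $g$ with $N>0$ and $\bar g(t)$ Riemannian is indeed available on $M\times[0,T)$.
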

\begin{proof}
Given $T_1$ and $T_2$ satisfying the hypotheses of the corollary, take $T_3$ and $T_4$ such that $0<T_1<T_2<T_3<T_4<T$. Then, consider two bump functions $f_1,f_2:[0,T]\mapsto \mathbb{R}^{+}$. Pick $f_1$ such that $f_1|_{[0,T_2]}\equiv 1$ and $f_1|_{[T_3,T]}\equiv 0$, and pick $f_2$ such that $f_2|_{[0,T_2]}\equiv 0$ and $f_2|_{[T_3,T]}\equiv 1$. Then, define on $M\times[T_2,T)$ the following $(0,2)$-tensor field
\begin{align}
g_2(\cdot,t)\doteq f_2(t)g(\cdot,t-(T_4-T_1)),
\end{align}
and then extend it for $t\in[0,T_2]$ as zero. In order for the previous expression to be well-defined, take $T_3$ and $T_4$ sufficiently close, and $T_3$ sufficiently close to $T_2$. This choice has to be done such that $T_1+T_2>T_4$. Thus, on $M\times(T_2,T)$ $g_2$ defines a Lorentzian metric. Now, define the following Lorentzian metric on $M\times[T_1,T_4]$:
\begin{align}\label{compactification1}
G\doteq f_1g+g_2.
\end{align} 
Notice that the metric $G$ is actually well-defined on the whole strip $M\times[0,T)$, but we are restricting it to the closed strip $M\times [T_1,T_4]$. Also, note that $G(\cdot,t)=g(\cdot,t)$ $\forall$ $T_1\leq t\leq T_2$, and $G(\cdot,t)=g(\cdot,t-(T_4-T_1)$ for $T_3\leq t\leq T_4$. In particular, this gives us that:
\begin{align*}
&\partial^{k}_tG(\cdot,T_1)=\partial^{k}_tG(\cdot,T_4)=\partial^{k}_tg(\cdot,T_1) \;\;\; \forall \;\;\; k\in\mathbb{N}_{0},\\
&\partial^{k}_aG(\cdot,T_1)=\partial^{k}_aG(\cdot,T_4)=\partial^{k}_ag(\cdot,T_1) \;\;\; \forall \;\;\; k\in\mathbb{N}_{0},
\end{align*}
where $\partial_a$ denote derivatives on the coordinates on $M$. The above relations give us that the metric $G$ can be glued smoothly when identifying the slices $M\times\{T_1\}$ with $M\times\{T_4\}$. This gluing gives us a quotient manifold $\Big(M\times[T_1,T_4]\Big)/\sim \;\;\cong M\times S^1$, with an induced smooth Lorentzian metric $G'$, such that the embedding $(M\times[T_1,T_2],g)\hookrightarrow (M\times S^1,G')$ is isometric. We can thus apply Theorem \ref{main} to $(M\times S^1,G')$ and get an isometric embedding
$(M\times S^1,G')\xrightarrow[]{\phi}(\tilde{M}^{2n+2},h)$. The composition of these two embeddings gives us an isometric embedding of $(M\times[T_1,T_2],g)$ in $(\tilde{M}^{2n+2},h)$. 

Finally, the continuous dependence with respect to the metric $g$ is a consequence of Corollary \ref{stability}.  
\end{proof}

\section{Discussion}

In Theorem \ref{main}, we have obtained a non-analytic embedding theorem for compact Lorentzian manifolds in Ricci-flat spaces, where the regularity assumptions are rather weak. In fact, by means of the Sobolev embedding theorems, we get that the minimal regularity needed for the Lorentzian mentric $g$ is $C^3$, since $H_{s+3}$ embeds in $C^3$ for $s>\frac{n}{2}$. In general, in order for $g\in H_{s+3}$, we will need $g$ in some $C^k$ space, with $k$ depending on $n$, and in particular a smooth $g$ always satisfies this condition. As far as we know, this is a new result, and we regard this theorem as a first step in
the search of sharper theorems in this context. 

Even tough the direct application of this theorem to physics would seem to be limited, since the compactness condition on space-time poses strong consequences, such as the existence of a closed time-like curve (see, for instance, \cite{Oneill}), we have been able to extract two important corollaries from Theorem \ref{main}, which give us that any closed strip in an $n$-dimensional globally hyperbolic space-time, with a compact (without boundary) Cauchy surface, can be isometrically embedded in a $2(n+1)$-dimensional Ricci-flat semi-Riemannian manifold. Notice that such strip can be taken to be as large as we like as long as it remains finite. Thus, for instance, any cosmological model with compact space-slices, considered from some arbitrary finite time $T_1>0$ up to some arbitrary large (but finite) time $T_2<\infty$ can be isometrically embedded in a $2(n+1)$-dimensional Ricci-flat manifold. Nevertheless, it is essential to highlight that the lack of control on the number of time-like dimensions of the ambient space represents a serious restriction for physical applications, which leaves the issue of controlling the signature of the ambient space as a physically relevant open problem.

We would also like to point out that, in  our opinion, the main theorem presented here has an intrinsic geometric value, offering a substantial improvement in the number of dimensions needed to get the embedding. In fact, compared with the sharpest results we are aware of for this kind of embedding we are saving as much as $n^{2}+3n-2$ dimension, which, for instance, in the context of physics where
$n=4$, $26$ extra dimensions are saved.

It is also worth pointing out that results such as the Campbell-Magaard theorems, which have been invoked in the context of some higher-dimensional space-time theories, are both local and analytic theorems. By local we mean that they guarantee the existence of the embedding in a neighbourhood of an arbitrary point in space-time, and, in fact, the size of such neighbourhood is not controlled. In this sense, we have shown that for globally hyperbolic space-times with compact Cauchy hypersurfaces, we can prove the existence of embeddings which are \textit{global in space} and \textit{local in time}, where by local in time we mean that as long as we chose a finite time interval, the embedding will exist. Moreover, regrading analyticity, we have dramatically lowered the regularity needed for the space-time metric, and this has enabled us to prove the stability of the embedding. On the other hand, we would like to remark that it would be desirable to control the number of time-like dimensions of the embedding manifold, even though this has been known to be a difficult task in analogous contexts (see \cite{Greene}). Also, we leave as a future research perspective the weakening of the compactness condition.    

Finally, we would like to stress that we do not expect the results obtained in this paper to be sharp, not regarding the number of extra dimensions needed for the embedding, neither regarding the signature of the embedding space. Probably, embeddings of even lower regularity might be attainable, but we do not regard this technical issue as one of the most important points to be pursued in upcoming research. The reason why we expect sharper results to be attainable, is the strategy adopted to get the embedding, which besides its originality, does not seem to be optimal. Conjecturing which would be the sharpest codimension for this type of embeddings does not seem to be sensible at this point. Certainly the main motivation behind this program is trying to give a response to this issue. Even though the Campbell-Maagaard theorem may suggest that we should ideally look for codimension one, and some explicit examples support this idea, this could easily turn out to be completely misleading, since, as previously explained, this theorem is local and depends crucially on the analyticity assumptions. Since the strategy of the proof has to be substantially modified in order to get global and smooth embeddings, it could easily be the case that the general result may demand more than one codimension. The parallelism with history behind Riemann-flat embeddings, going from Janet-Cartan's theorem to Nash's theorem and continuing up to present days, shows that we should be cautious in conjecturing such optimal codimension. It is certainly a very interesting open problem to establish sufficient conditions which enable us to attain this optimal codimension one global $C^k$-Ricci-flat embedding.
\section*{Acknowledgements}

\noindent R. A. and C. R. would like to thank CNPq and CLAF for financial
support. R. A. would also like to thank CAPES for partial financial support.
We thank the referee for valuable comments and suggestions.
\bigskip

\end{document}